\title{Quantum Physics from Number Theory}
\author{Tim Palmer\\ Department of Physics, University of Oxford, UK\\
tim.palmer@physics.ox.ac.uk}
\date{\today}                                          
\newcommand\be{\@ifstar{\[}{\begin{equation}}}
\newcommand\ee{\@ifstar{\]}{\end{equation}}}
\newcommand\bp{\begin{pmatrix}}
\newcommand\ep{\end{pmatrix}}
\newtheorem*{theorem*}{Theorem}
\newtheorem*{corollary}{Corollary}
\begin{document}
\bibliographystyle{plain}
\maketitle
\abstract{The properties which give quantum mechanics its unique character - unitarity, complementarity, non-commutativity, uncertainty, nonlocality - derive from the algebraic structure of Hermitian operators acting on the wavefunction in complex Hilbert space. Because of this, the wavefunction cannot be shown to describe an ensemble of deterministic states where uncertainty simply reflects a lack of knowledge about which ensemble member describes reality. This has led to endless debates about the ontology of quantum mechanics.

Here we derive these same quantum properties from number theoretic attributes of trigonometric functions applied to an explicitly ensemble-based representation of discretised complex Hilbert states. To avoid fine-tuning, the metric on state space must be $p$-adic rather than Euclidean where $1/p$ determines the fineness of the discretisation. This hints at both the existence of an underpinning fractal state-space geometry onto which states of the world are constrained. In this model, violation of Bell inequalities is a manifestation of this geometric constraint and does \emph{not} imply a breakdown of local space-time causality. 

Because the discretised wavefunction describes an ensemble of states, there is no collapse of the wavefunction. Instead measurement describes a nonlinear clustering of state-space trajectories on the state-space geometry. In this model, systems with mass greater than the Planck mass will not exhibit quantum properties and instead behave classically. The geometric constraint suggests that the exponential increase in the size of state space with qubit number may break down with qubit numbers as small as a few hundred. Quantum mechanics is itself a \emph{singular} limit of this number-theoretic model at $p=\infty$. A modification of general relativity, consistent with this discretised model of quantum physics, is proposed.}

\section{Introduction}
\label{intro}
The quantum revolution was kickstarted by Planck's insight that the energy of an electrically charged oscillator in a cavity containing black-body radiation varies, not continuously, but in discrete increments. This proposal led directly to the development of quantum mechanics, our most successful theory of physics. 

The properties that give quantum mechanics its unique status in physical theory - unitarity, complementarity, non-commutativity, uncertainty and nonlocality - derive from the representation of quantum observables as Hermitian operators on complex Hilbert space. From an experimental point of view there is no reason for questioning this representation. And yet from a theoretical point of view, perhaps there is \cite{NaturePhysics}. Not least, the measurement problem is still felt by many to be unsolved and perhaps unsolvable in quantum mechanics. In addition, the failure of local realism in quantum mechanics, implied by the violation of Bell's inequality \cite{Bell:1964} \cite{CHSH}, is at least in tension with the deterministic causal theory of general relativity. At heart, these conceptual problems arise because the quantum mechanical wavefunction $\psi$ cannot be interpreted as describing an ensemble of states of a deterministically evolving locally causal system. 

In this paper, a different approach is taken. Planck's great insight is applied a second time, to the complex continuum of Hilbert space (if the term `second quantisation' was not already used in physics, it would have formed the title of this paper). That it to say, we discretise Hilbert space \cite{Buniy:2005}. Since the precision at which experiments are performed is necessarily finite, the experiments that confirm quantum mechanics will also confirm a consistent theory in which complex Hilbert vectors and tensor products belong to a sufficiently fine discretisation of Hilbert space. 

Of course, this would achieve nothing if \emph{all aspects} of the discretised model were arbitrarily close to quantum mechanics. However, important differences between quantum mechanics and the discretised model are uncovered. Firstly, the unique properties that characterise quantum mechanical systems (see above) can all, instead, be derived as a simple consequence of number theoretic properties of trigonometric functions, applied to this $\psi$-ensemble \cite{Hance2022Ensemble} description. Here we fix the fineness of the discretisation by the value $1/p$ where $p$ is a large integer. To avoid the derivations from being considered finely tuned (the number-theoretic derivations distinguish between rational and irrational angles), the metric on state space must be $p$-adic rather than Euclidean. This hints at a $p$-adic and hence fractal geometric explanation of quantum nonlocality: not as a physical process that is in tension with the causal space-time geometry of general relativity, but rather as a constraint imposed by fractal state-space geometry.  

In this model there is no `collapse of the wavefunction' and the measurement problem \cite{HanceHossenfelder:2022a} is simply that of describing trajectory clustering - potentially gravitationally induced - in state space. In this framework, quantum mechanics is a \emph{singular} limit of this discretisation at $p=\infty$. This notion is consistent with Hardy's axioms for quantum mechanics - any discretisation, no matter how fine, will violate the Continuity axiom \cite{Hardy:2004}. Indeed, all the number-theoretic results in this paper hold no matter how fine is the discretisation providing $1/p \ne 0$. Nevertheless we estimate a value $p_{\text{max}}$ of $p$ and from this make two predictions: that a limited number of qubits (possibly as few as a couple of hundred) can be generally entangled together, and that objects with mass greater than the Planck mass will not exhibit quantum properties.  

Section \ref{IST} describes the framework for the discretisation of complex Hilbert space and a key number theorem that is used to derive quantum properties from bit string representations of the wavefunction. Section \ref{qubit1} describes the number-theoretic approach to complementarity, non-commutativity, uncertainty, Schr\"{o}dinger evolution and measurement, the quantum mechanical and classical limit of this theory. Section \ref{entangled} describes Bell's theorem and the GHZ state from this discretised perspective. A modification to the field equations of general relativity, consistent with the results above, is proposed in Section \ref{discussion}. This paper is an extension and a simplification of the author's earlier work \cite{Palmer:2020} discretising the Bloch sphere.

\section{Discretisation of Complex Hilbert Space and the Ensemble Representation of the Wavefunction}
\label{IST}

\subsection{The Symbolism of Atomic Measurements}

The title of this sub-section is taken from the sub-title of Schwinger's book \cite{Schwinger} on quantum measurement. In his book, Schwinger attempts to develop quantum mechanics by constructing an algebra of abstract measurement symbols. This notion of measurement outcomes as mere symbols clearly makes some sense: we do not need to know what `spin up' and `spin down' actually means in order to express the spin state of some electron a $|\text{spin up}\rangle+|\text{spin down}\rangle$, any more than we need to know what it means biologically for a cat to be alive or dead, in order to express the state of a cat as $|\text{alive}\rangle+ |\text{dead}\rangle$. In both cases we can simply replace words with symbols so that these states are represented as $|1\rangle + |-1\rangle$. 

This notion of measurement symbolism fits well with the other use of symbols in this paper, to represent a coarse-grain partition of the state space of deterministic dynamical systems. Indeed symbolic dynamics is a mature branch of classical dynamical systems theory \cite{LindMarcus}, and is useful for describing the topology of attractors of dynamical systems. As an example, the unstable periodic orbits (UPOs) which are embedded in the Lorenz attractor \cite{Lorenz:1963} can be represented as finite bit strings (e.g. $1,-1,1,-1,1,1,-1,1$ represents symbolically a UPO of the Lorenz attractor, where `$1$' is a symbol representing the left-hand lobe of the attractor, and `$-1$' is a symbol representing the right-hand lobe). As an example of symbolic dynamics, Ghys \cite{Ghys} has shown that there is a one-to-one correspondence between the symbolic representation of the UPOs of the Lorenz attractor and a symbolic representation of the modular group. 

We discretise complex Hilbert space with these dual notions in mind. Hence, when representing Hilbert states as bit strings, the individual bits are to be thought of as defining underlying deterministic states relative to some coarse-grain partition of some deterministic state space. Long strings of such bits will have well-defined statistical properties in terms of ensemble averages, standard deviations and correlations

\subsection{Discretisation and Niven's Theorem}

Hilbert vectors over the real numbers provide a natural way to represent classical systems under (epistemic) uncertainty. For example, the state of the top card of a well-shuffled face-down deck of cards can be represented as 
\be
|\psi_{\text{ball bearing}} \rangle =\sqrt{\frac{1}{52}} \  | \text{ace of hearts} \rangle + \sqrt{\frac{51}{52}} \ |\text{not ace of hearts} \rangle
\ee
That the squared sum of the components of the vector equals one is consistent with Pythagoras' theorem \cite{Isham}. Born's rule (interpreting squared amplitudes as probabilities) is consistent with the fact that the probabilities of the top card being the ace of hearts and not being the ace of hearts sum to 1. That the squared amplitudes are rational numbers is consistent with the frequentist-based probabilities associated with the finite deck, and it is also consistent with one of Penrose's motivations for developing his model of spin networks \cite{Penrose:2010} where combinatoric rules ensure probabilities are always describable by rational numbers.

Based on the notion of discretisation of Hilbert space, we will be seeking an ensemble description of a generic $K$-qubit quantum state $|\psi_K\rangle$ in the form
\be
\label{hilbert}
a_0 |1,1,\ldots,1\rangle +a_1e^{i \phi_1} |1,1,\ldots, 1,-1 \rangle + a_2e^{i \phi_2} |1,1,\ldots, -1, 1\rangle +\ldots a_{2^K} e^{i \phi_{2^K}} |-1,-1,\ldots -1,-1\rangle
 \ee
where $\sum_0^{2^K} a^2_j=1$. Motivated in part by the discussion above we look for a discretisation of the form 
 \be
 \label{rational1}
 a^2_j = \frac{m_j}{2N} \in \mathbb Q, 
 \ee
where $N \in \mathbb N$ and $0\le m_j\le  2N$. The factor 2 is linked to a specific representation of quaternionic operators discussed in Section \ref{singlequbit}. 

A key difference between the Hilbert space representation of a quantum state and the Hilbert space representation of a classical state under uncertainty, is the essential role that complex numbers play in the former \cite{MingChen}. We seek a discretisation of complex phases of the form
\be
\label{rational2}
 \frac{\phi_j}{2 \pi} = \frac{n_j}{4N}\in \mathbb Q
 \ee
with $1 \le n_j \le 4N$ where again the factor of 4 is linked to a representation of quaternionic operators discussed in Section \ref{singlequbit}. In this way $\{\phi_j\}$ is the finite multiplicative group of $4N$th roots of unity. Importantly for the development below, this is homomorphic to the group of cyclic permutations of the form $(1,2, \ldots 4N)$. 

As an example, in quantum mechanics a single qubit 
 \be
\label{qubit}
|\psi_1(\theta, \phi)\rangle = \cos\frac{\theta}{2} |1\rangle + \sin\frac{\theta}{2} e^{i \phi} | -1 \rangle
\ee
can be represented by a point on the continuum Bloch sphere, where $\theta$ denotes colatitude and $\phi$ longitude relative to some north pole $z$. In the discretised framework, if (\ref{qubit}) satisfies the rationality constraints (\ref{rational1}) and (\ref{rational2}) then
\be
\label{rational3}
\sin^2 \frac{\theta}{2}=\frac{m}{2N}; \ \ \ \  \frac{\phi}{2\pi}=\frac{n}{4N}
\ee
where $0 \le m \le 2N$, $0 \le n \le 4N$. From (\ref{rational3}), $\cos \theta=1- 2\sin^2 \theta/2$ must be rational. 

Let $S_z(\theta, \phi)$ ($S_z$ for short) denote a skeleton of points on the unit 2-sphere (the discretised Bloch sphere) with north pole at $z$, satisfying (\ref{rational3}). A critical property of $S_z$, exploited repeatedly below, is the number theoretic incommensurateness between rational angles and angles with rational cosines, referred to as Niven's Theorem:
\begin{theorem*}
\label{niven}
 Let $\phi/2\pi \in \mathbb{Q}$. Then $\cos \phi \notin \mathbb{Q}$ except when $\cos \phi =0, \pm \frac{1}{2}, \pm 1$. \cite{Niven, Jahnel:2005}
\end{theorem*}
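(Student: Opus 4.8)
The plan is to exploit the classical fact that $2\cos\phi$ is an \emph{algebraic integer} whenever $\phi$ is a rational multiple of $2\pi$, and then to invoke the elementary observation that a rational algebraic integer must in fact be an ordinary integer. Write $\phi = 2\pi\, k/n$ with $k,n \in \mathbb{Z}$, $n \ge 1$. The goal is to show that if $\cos\phi \in \mathbb{Q}$ then $2\cos\phi \in \{-2,-1,0,1,2\}$, which is exactly the claimed list $\cos\phi \in \{0,\pm\tfrac12,\pm1\}$.

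First I would set $a_m := 2\cos(m\phi)$ for $m \ge 0$ and record the product-to-sum identity in the form of the three-term recurrence
\[
a_{m+1} = a_1\,a_m - a_{m-1}, \qquad a_0 = 2,\quad a_1 = 2\cos\phi .
\]
A strong induction then shows $a_m = P_m(a_1)$ for a monic polynomial $P_m \in \mathbb{Z}[x]$ of degree $m$ (essentially the Chebyshev polynomials up to normalisation): integrality of the coefficients is immediate from the recurrence, and monicity follows because $x\,P_m$ contributes the leading term $x^{m+1}$ while $P_{m-1}$ has strictly smaller degree. Since $n\phi = 2\pi k$ we have $a_n = 2\cos(2\pi k) = 2$, so $a_1 = 2\cos\phi$ is a root of $P_n(x) - 2$, a monic polynomial of degree $n \ge 1$ with integer coefficients; hence $2\cos\phi$ is an algebraic integer. (Equivalently one can argue that $e^{i\phi}$ and $e^{-i\phi}$ are roots of $x^n - 1$, so both are algebraic integers, and their sum $2\cos\phi$ lies in the ring of algebraic integers.)

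Next I would close the argument with the rational root theorem. Suppose $\cos\phi \in \mathbb{Q}$, so $2\cos\phi = p/q$ in lowest terms, and substitute into $P_n(p/q) - 2 = 0$; clearing denominators by $q^n$ forces $q \mid p^n$, whence $q = \pm 1$ and $2\cos\phi \in \mathbb{Z}$. Because $|2\cos\phi| \le 2$, the only possibilities are $2\cos\phi \in \{-2,-1,0,1,2\}$, i.e.\ $\cos\phi \in \{-1,-\tfrac12,0,\tfrac12,1\}$; conversely these values are realised (at $\phi = \pi,\ 2\pi/3,\ \pi/2,\ \pi/3,\ 0$), so the exceptional set is exactly as stated.

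I do not expect a genuine obstacle, since the result is classical, but the one non-mechanical ingredient is the change of variable from $\cos\phi$ to $2\cos\phi$ together with the monicity bookkeeping in the induction: it is precisely monicity (not mere integrality) of $P_n(x)-2$ that the rational root theorem requires, and this is where a careless write-up would slip. A secondary point worth stating explicitly is the lemma ``a rational algebraic integer is an integer'', since that is the only external fact being used beyond elementary trigonometry.
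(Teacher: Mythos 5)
Your argument is correct, and it is the standard proof of Niven's theorem. Note that the paper itself does not prove this statement at all --- it is quoted as a known classical result with citations to Niven and to Jahnel, and the argument you give (show $2\cos\phi$ is an algebraic integer via the monic Chebyshev-type recurrence $a_{m+1}=a_1a_m-a_{m-1}$, then apply the rational root theorem and the bound $|2\cos\phi|\le 2$) is essentially the one in those references, so there is nothing to reconcile. All the steps check out: the recurrence is the identity $2\cos((m+1)\phi)+2\cos((m-1)\phi)=2\cos\phi\cdot 2\cos(m\phi)$, monicity of $P_{m+1}=xP_m-P_{m-1}$ follows since $\deg P_{m-1}<m+1$, and $P_n(2\cos\phi)=2\cos(2\pi k)=2$ gives the required monic integer polynomial $P_n(x)-2$ of degree $n\ge 1$. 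The only cosmetic slip is that $P_0=2$ is not monic of degree $0$; this is harmless since the induction and the final application only need $P_m$ for $m\ge 1$. Your closing remark is also well taken: the one substantive external ingredient is precisely that a rational algebraic integer is an integer, which your rational-root computation supplies.
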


In the remainder of this section, an explicit ensemble-based representation of the $K$-qubit wavefunction is developed Most discussion is devoted to $K=1$. From there, the generalisation to higher qubit states is relatively straightforward. 

\subsection{Single Qubits}
\label{singlequbit}

To start, consider four bit strings $A$, $B$, $C$ and $D$, each of $N$ elements drawn from $\{1,-1\}$ but otherwise arbitrary. As discussed in more detail in Section \ref{evolution}, `$1$' and `$-1$' will be used as symbolic representations of some partition of state space. As discussed in Section \ref{sizeofN}, $N$ is a positive integer inversely dependent on the energy of the quantum system under study. 

The fundamental notion underpinning the construction in this section is the close link between quaternions and rotations in physical space. In particular, with $||$ denoting concatenation, let
\be
\mathcal E = A\;||\;B\;||\;C\;||\;D
\ee
denote a $4N$ element bit string and define
\begin{align}
\label{quat}
i_1 (\mathcal E) &= B\ || -A \;|| -D\; || \ \ \ C; \nonumber \\
i_2 (\mathcal E) &= C\ || \ \ \ D\;|| -A \;|| -B; \nonumber \\
i_3 (\mathcal E) &= D\ || -C\;|| \ \ \ B\; || -A
\end{align}
where the negation of a bit string is obtained by negating all individual bits in the string. It is straightforwardly shown that the operators $i_1$, $i_2$ and $i_3$ are quaternionic, i.e. satisfy
\be
 i^2_1 (\mathcal E) =  i^2_2 (\mathcal E)= i^2_3 (\mathcal E)=-\mathcal E;  \ \ \ \ 
 i_2  \circ i_1 (\mathcal E) = i_3 (\mathcal E)
 \ee
We now construct a 1-parameter family of bit strings $i_1^{(m)}(\mathcal E)$ that monotonically interpolates between $\mathcal E$ and $i_1(\mathcal E)$ as $m$ goes from $0$ to $N$. For example, with $N=3$, this is achieved with
\begin{align}
\label{N3}
i_1^{(0)}(\mathcal E)= \mathcal E=& \ a_1 a_2 a_3\  ||\ \ \ \ \ \ b_1 b_2 b_3\ \ \ \ \ \ ||\ \ \ \ \ \ \ c_1 c_2 c_3 \ \ \ \ \ \ \ || \ d_1 d_2 d_3 \nonumber \\
 i_1^{(1)}(\mathcal E) = &\ a_1a_2b_3\ ||\ \ \ \ b_1b_2 -a_3\ \ \ \ \ ||\ \ \ \  c_1c_2-d_3\ \ \ \ \  ||\ d_1d_2c_3 
 \nonumber \\
  i_1^{(2)}(\mathcal E) = &\ a_1b_2b_3\ ||\ \ b_1-a_2 -a_3\ \ \ ||\ \ \  c_1-d_2-d_3\ \ \  ||\ d_1c_2c_3
 \nonumber \\
   i_1^{(3)}(\mathcal E) =i_1(\mathcal E)= &\ b_1b_2b_3\ ||\ -a_1-a_2 -a_3\ ||\ -d_1-d_2-d_3\ ||\ c_1c_2c_3 
 \end{align}
 where the correlation between  $i_1^{(m)}(\mathcal E)$ and $\mathcal E$ equals $1-m/3$ and the correlation between $i_1^{(m)}(\mathcal E)$ and $i_1(\mathcal E)$ equals $m/3$. More generally, define partial concatenation operators
\be
B {_{||}^m} C =\{\underbrace{b_1, b_2, \ldots b_{N-m}}_{N-m} \underbrace{c_{N-m+1}, c_{N-m+2}, \ldots c_N}_m\}
\ee
so that $B{_{||}^0}C=B$, $B{_{||}^N}C=C$. Define the following generalisations $i_1^{(m)}$ of $i_1$ for $0\le m\le 2N$:
\begin{align}
\label{im}
&i_1^{(m)}(\mathcal E)= \nonumber \\
&\ \ \ \ \ \ \  \ \ A {_{||}^m} B\ \ \ \ \ \ \ \  \ ||\ \ \ \ \ \ B{_{||}^m}(-A)\ \ \ \   ||\ \ \ C{_{||}^m}(-D)\ \ \ \ \ \ \ \ || \ \ \ \ \ \ \ \ \ D{_{||}^m}C
\ \ \ \ ; 0\le m \le N \nonumber \\ 
 &\ \ \ \ \ \ B{_{\ \ \ ||}^{(m-N)}}(-A) \ ||\ (-A){_{\ \ ||}^{m-N}}(-B) \ || \ (-D){_{\ \ ||}^{m-N}}(-C)\ || \ \ C{_{\ \ ||}^{m-N}}(-D)\ \ \ ;N \le m\le 2N 
 \end{align}
Note that, $i_1^{(0)}(\mathcal E)= \mathcal E$, $i^{(N)}_{1} (\mathcal E)=i_1(\mathcal E)$,  $i^{(2N)}(\mathcal E)=-\mathcal E$. The correlation between $i_2^{(m)}(\mathcal E)$ and $\mathcal E$ is equal to $1-m/N$. 

We now let $i_1^{(m,n)}(\mathcal E)$ denote the result of applying the cyclic permutation $(1,2,3\ldots 4N)$ $n$ times to $i_1^{(m)}(\mathcal E)$. Hence, from (\ref{N3}), for $N=3$
\begin{align}
i_1^{(0,0)}(\mathcal E)= & \ a_1 a_2 a_3\  ||\ \ \ b_1 b_2 b_3\ \ \ \ \ \ ||\ \ \ \ \ \ c_1 c_2 c_3 \ \ \ \ \ || \ \ \  d_1 d_2 d_3 \nonumber \\
 i_1^{(1,1)}(\mathcal E) = &\ c_3a_1b_2\ \ ||\ \ \ \ b_3b_1 b_2\ \ \ \ \ ||\ \ \ \  -a_3 c_1c_2\ \ \  ||\ -d_3d_1c_2 
 \end{align}
and so on. 

Finally, in the discretised representation of complex Hilbert space, we associate the single qubit wavefunction $|\psi_1(\theta, \phi)\rangle$ (see (\ref{qubit})) with the length $4N$ bit string
\be
\label{bloch}
\mathcal B (\theta, \phi)=i_1^{(m,n)} (\mathbbm 1).
\ee
where
\be
\mathbbm 1= I \ ||\  I \ || \ I \ || \ I , \ \ \ \ I =\{ \underbrace{1,1,1, \ldots, 1}_N \}
\ee
In (\ref{bloch}): i) $\sin^2 \theta/2 = m/2N$ (the fraction of $-1$s in $\mathcal B(\theta, \phi)$) and $\cos \theta = 1-m/N$ (the correlation between $\mathcal B(\theta, \phi)$ and $\mathcal B(0,0)$); ii) $\phi= 2\pi n/4N$. The property 
\be
\label{complex}
  i_3 \; (\mathcal B(\frac{\pi}{2},0))=i_3 \circ i_1(\mathbbm 1)= -i_2(\mathbbm 1) = i_1^{(N,N)}(\mathbbm 1)=\mathcal B (\frac{\pi}{2},\frac{\pi}{2}) 
\ee
describes the essential complex nature of the bit-string states (and the associated skeleton $S_z$) in this discretised model - operating on a bit string at the equator of the discretised Bloch Sphere by the square root of minus operator $i_3$ is equivalent to a rotation in longitude by an angle $\pi/2$ about the $z$ axis. 

Just as the quantum wavefunction is defined modulo a global phase transformation, so, in this discretised ensemble representation, the statistical properties of $\mathcal B (\theta, \phi)$ are preserved under a global permutation of bits - one that is consistently applied to all $\mathcal B (\theta, \phi)$. We can represent this global phase transformation by replacing $\mathbbm 1$ in (\ref{bloch}) with a more general $\mathcal E$. 

This construction can be generalised by including the notion of a `null measurement' \cite{Hardy:2004}. For example, we may have two (or more) detectors, each represented by a symbolic label, but a particle may be incident on neither (or none) of them. This generalisation can be achieved  by appending the null symbol $X$ to $i_1^{(m)}(\mathbbm 1)$, $n_X$ times.  With such a generalisation, $\cos^2 \theta/2 = 1-m/2N$ equals the fraction of $1$s amongst all non-null states in $\mathcal B(\theta, \phi)$ whilst $\phi= 2\pi n/p$ where $p=4N+n_X$. If $n_X$ is odd and $p$ not divisible by 3, then the exceptions $\cos \phi = 0, \pm 1/2, -1$ to Niven's theorem can be ignored (which we henceforth do). 

\subsection{Multiple Qubits}
\label{K}
In quantum mechanics, let $|\psi_1\rangle$, $|\psi_2\rangle$ denote two general $K-1$ qubit states in quantum mechanics. Then a general $K$ qubit state can be written as
\be
\label{psin}
\cos \frac{\theta}{2} |\psi_1\rangle |1\rangle + \sin \frac{\theta}{2} e^{i \phi} |\psi_2\rangle|-1\rangle
\ee
Here the number $2^K-1$ of complex degrees of freedom in an $K$-qubit Hilbert vector increases exponentially as $1,3,7,15 \ldots$ for $K=1,2,3,4,\ldots$. 

A general $K$-qubit Hilbert state in the discretised model is associated with $K$ bit strings, each string comprising $2^{K+1}N$ bits. The state can be defined inductively. Let $\mathcal B_1 (K-1)$, $\mathcal B_2(K-1)$ denote two ordered collections of $K-1$ bit strings each of length $2^K N$ and let $\mathcal B_1 \oplus \mathcal B_2$ denote an ordered collection of $K-1$ bit strings obtained by concatenating the $j$th bit string of $\mathcal B_1$ with the $j$th bit string of $\mathcal B_2$, so that each bit string of $\mathcal B_1 \oplus \mathcal B_2$ has length $2^{K+1}N$. Then, consistent with (\ref{psin}), the $K$-qubit discretised Hilbert state is given by the $K-1$ bit strings $\mathcal B_1(K-1) \oplus \mathcal B_2(K-1)$, and the single length-$4N$ bit string $i_1^{(m_0,n_0)}(\mathbbm 1)$ concatenated $2^{K-1}$ times, i.e., 
\begin{align}
& \ \ \ \ \ \ \ \ \ \ \ \mathcal B_1(K-1)\ \oplus  \  \mathcal B_2(K-1)  \nonumber \\
& i_1^{(m_0,n_0)}(\mathbbm 1) \ || \ i_1^{(m_0,n_0)}(\mathbbm 1) \ || \ldots \ ||\ i_1^{(m_0,n_0)}(\mathbbm 1)
\end{align}
For example, a general $K=3$-qubit state comprises the 3 bit strings
\begin{align}
\label{alicebobbits2}
& \ \ \overbrace{i_1^{(m_2, n_2)} (\mathbbm 1)\ || \  i_1^{(m_3, n_3)}(\mathbbm 1)} \ \ \ \ ||
 \ \ \ \ \ \overbrace{i_1^{(m_5, n_5)} (\mathbbm 1)\ || \  i_1^{(m_6, n_6)}(\mathbbm 1)} \ \ \ 
 \nonumber \\
&\ \  \underbrace{i_1^{(m_1, n_1)} (\mathbbm 1)\ || \  i_1^{(m_1, n_1)}(\mathbbm 1)}\ \ \ \ ||
\ \ \ \ \ \underbrace{i_1^{(m_4, n_4)} (\mathbbm 1)\ || \  i_1^{(m_4, n_4)}(\mathbbm 1)} \ \ \  \nonumber \\
& \ \ i_1^{(m_0, n_0)} (\mathbbm 1)\ || \ \ \  i_1^{(m_0, n_0)}(\mathbbm 1) \ \ \ || \ 
 \ \ \ i_1^{(m_0, n_0)} (\mathbbm 1)\ \ \ || \  i_1^{(m_0, n_0)}(\mathbbm 1) \ \ \ 
\end{align}
with 14 degrees of freedom (corresponding to 7 phase degrees of freedom and 7 amplitude degrees of freedom in quantum mechanics). Here the two pairs of bit strings in between the upper and lower braces are generic 2-qubit states. 

The entangled Bell state is discussed in particular in Section \ref{Bell}

\section{The Physical Interpretation of Single Qubit Physics}
\label{qubit1}

\subsection{Complementarity}
\label{MachZ}

In quantum mechanics, systems can exhibit wave-like properties or particle-like properties, but not both simultaneously. As shown in Fig \ref{MZ}, these properties are manifest in a Mach-Zehnder (MZ) interferometer (Fig \ref{MZ}) with the second half-silvered mirror in place, or with the second half-silvered mirror removed, respectively. 
\begin{figure}
\centering
\includegraphics[scale=0.3]{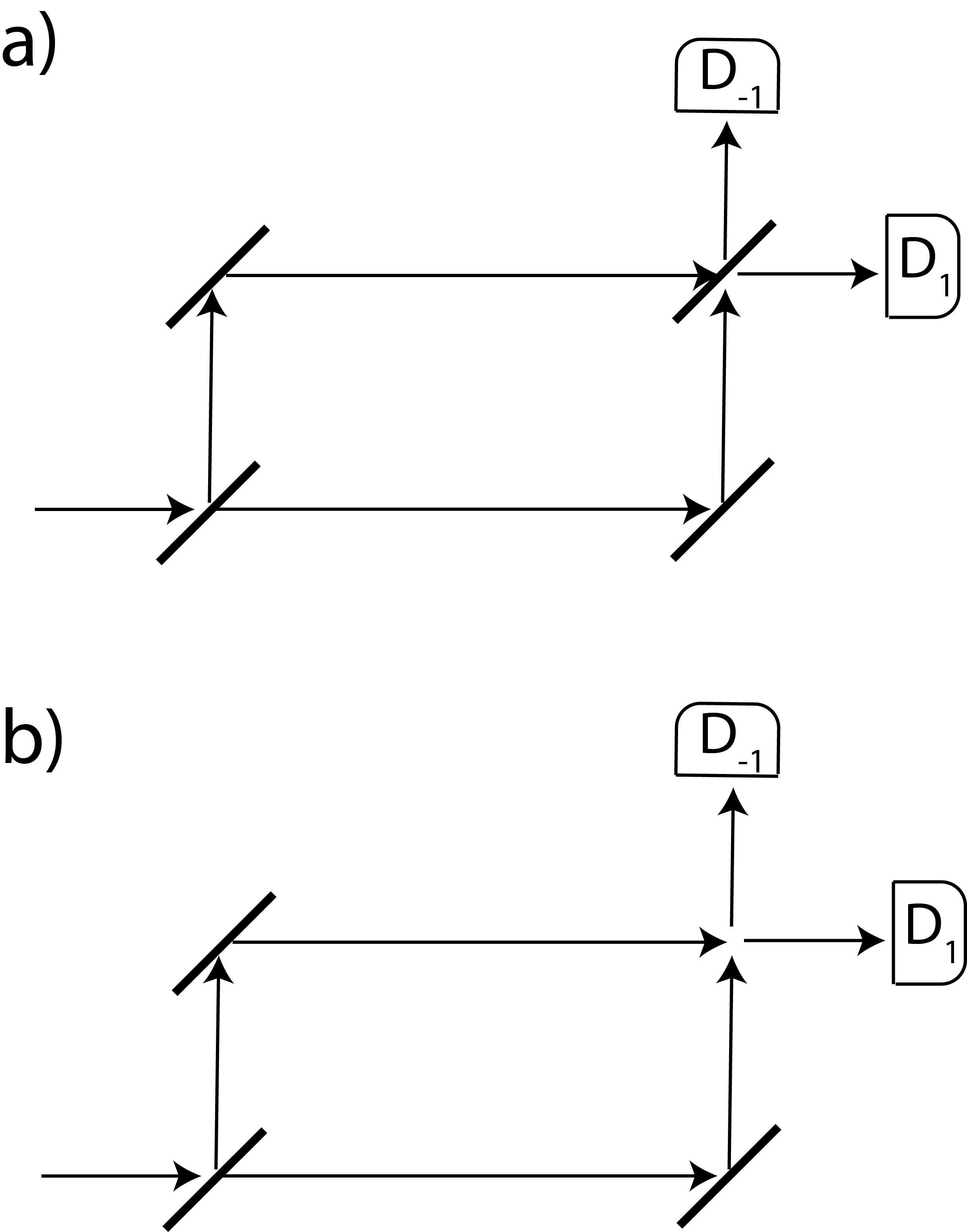}
\caption{\emph{a) A Mach-Zehnder interferometer with the second half-silvered mirror in place, performing an interferometric experiment. b) A Mach-Zehnder interferometer with the half-silvered mirror removed, performing a which-way experiment. In both cases, the nominal lengths of the interferometer arms are the same. However, the exact lengths must necessarily differ. Conversely, for number theoretic reasons, it is impossible for the experimental outcomes in a) and b) to be simultaneously defined, i.e., for the same exact arm lengths, consistent with the quantum notion of complementarity.}}
\label{MZ}
\end{figure}
The quantum state of a photon passing through an MZ interferometer can be written as
\be
\label{psi1}
|\psi\rangle= \cos \frac{\phi}{2} |1\rangle + \sin \frac{\phi}{2} |-1\rangle
\ee
where $\phi$ denotes a phase difference associated with the two arms of the interferometer. Here $|1\rangle$ and $|-1\rangle$ are the measurement eigenstates associated with the detectors $D_1$ and $D_{-1}$, respectively. If we now remove the second half-silvered mirror, then the quantum state becomes 
\be
\label{psi2}
|\psi\rangle= \frac{1}{\sqrt 2} ( |1\rangle + e^{i \phi} |-1\rangle)
\ee

As discussed, in the discretised model, $|\psi\rangle$ in (\ref{psi1}) is represented by a bit string $\mathcal B(\phi, 0)$ where the fraction of $1$s is equal to $\cos^2 \phi/2$ (and the fraction of $-1$s is equal to $\sin^2 \phi/2$), providing $\cos^2 \phi/2$ is rational, and of the form $m/2N$. By varying $\phi$ from $0$ to $\pi$ in discrete steps which ensure the rationality condition is satisfied, then the fraction of $1$s varies from $1$ to $0$ in steps which can be as small as we like, for large enough $N$.  In the discretised model, the symbol `$1$' labels a deterministic world where the detector $D_1$ registers a particle, and the symbol `$-1$' labels a deterministic world where the detector $D_{-1}$ registers a particle. How to define `measurement' objectively in this discretised model is discussed in Section \ref{evolution}. By contrast, $|\psi\rangle$ in (\ref{psi2})  is represented by a bit string $\mathcal B(\pi/2, \phi)$ where, for all $\phi$, the fraction of $1$s (and $-1$s) is equal to 1/2 and $\phi$ is a rational angle. 

Now when an experimenter sets up such an experiment, of course he or she cannot control \emph{directly} whether $\phi$ or $\cos \phi$ is rational. All an experimenter can do is set $\phi$ to within some nominal accuracy. We imagine some small interval of size $\epsilon >0$ which defines the nominal accuracy of the experimenter's ability to set $\phi$. The theoretical structure of the model, not some incredible experimental finesse, ensures that the appropriate rationality constraints are obeyed. 

Imagine performing interference measurements on one ensemble of particles with the second half silvered mirror in place, and which-way measurements on a second ensemble of particles with the second half silvered mirror removed. Suppose that $\phi$ is nominally the same (say $\phi=0$) for both ensembles of measurements. Then the exact values $\phi^*$ will not be the same for both ensembles: for the first ensemble, each $\phi^*$ would be drawn from the subset of exact values where $\cos \phi^*$ is rational, whilst for the second ensemble, each $\phi^*$ would be drawn from the subset where $\phi^*$ was itself rational.  Indeed, we can, without loss of generality, assume that each particle is uniquely labelled by an exact $\phi^*$ value. This will be a pervasive theme in the discussion below, that a particle's `hidden variables' can be associated with some exact (but unknown) measurement setting $\phi^*$. Of course this means that, at the time a particle is emitted from a source, its hidden variables are not localised within the particle in question. However, such a situation is no different to describing a human by some lifetime achievement (e.g., whether they won a Nobel Prize or not). The data that determines such a lifetime achievement is manifestly not localised within the baby's DNA at the time of birth, but nevertheless exists on a spacelike hypersurface through the birth event. 

Suppose an interference measurement was performed on a particular particle and consider the hypothetical counterfactual measurement where a which-way measurement was performed \emph{on the same particle}, i.e., keeping the hidden variable fixed. By construction, such an hypothetical experiment is defined by removing the half-silvered mirror but keeping fixed the \emph{exact} lengths of the interferometer arms associated with the real experiment. It is clearly a matter for the model equations as to whether such a hypothetical experiment is consistent with the laws of physics: as mentioned, the experimenter does not have control over the exact measurement settings and therefore cannot ensure that the exact length of the arms remain identical (a passing gravitational wave would cause them to differ) from one run to the next. Indeed, this simultaneous measurement is internally inconsistent: by Niven's theorem it is impossible for $\phi$ to be simultaneously a rational angle and have a rational cosine (above we have eliminated all exceptions to Niven's theorem). Hence, by number theory, it is impossible for a quantum system to have wave-like properties ($|\psi_1\rangle$) and particle-like properties ($|\psi_2\rangle$) simultaneously.

To prevent simultaneous knowledge of the wave and particle properties of a quantum system in their deterministic toy-model of quantum interference, Catani et al \cite{Catani:2022} introduce by assumption an `epistemic obstruction'. We do not need to \emph{postulate} such an obstruction here; it is a \emph{consequence} of the particular discretisation of complex Hilbert Space used. Indeed, as discussed in Section \ref{Bell}, the number-theoretic basis for this epistemic obstruction is identical to that which explains the violation of Bell's inequality - Niven's theorem. From this perspective, we agree with Feynman \cite{Feynman}, and Hance and Hossenfelder \cite{HanceHossenfelder:2022b} that quantum interference is the only real mystery in quantum physics and disagree with Catani et al who claim that it isn't.

\subsection{Non-commutativity}

We analyse a sequential Stern-Gerlach experiment (Fig \ref{SG}a) in the discretised framework. In a particular run of the experiment, a spin-1/2 particle passes through a Stern-Gerlach apparatus ($SG_1$) oriented in the nominal $a$ direction. The spin-up output channel of $SG_1$ is fed into a second Stern-Gerlach apparatus ($SG_2$) oriented in the nominal $b$ direction. The spin-up output channel of $SG_2$ is fed into a third Stern-Gerlach apparatus ($SG_3$) oriented in the nominal $c$ direction. We can suppose that an experiment comprises many individual runs, all with the same nominal orientations, but different exact orientations. 

\begin{figure}
\centering
\includegraphics[scale=0.5]{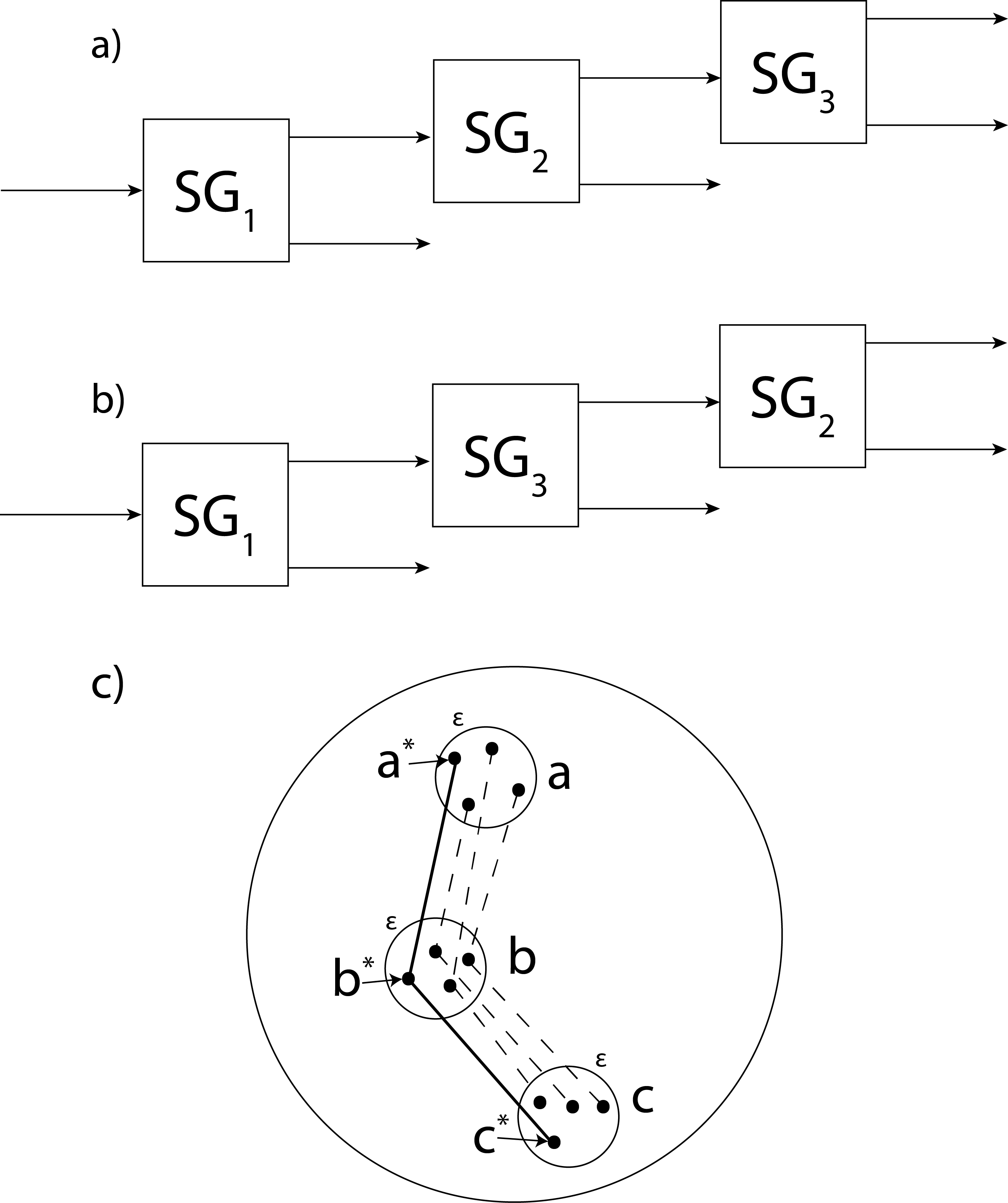}
\caption{\emph{a) A sequential Stern-Gerlach experiment. b) As a) but with the last two Stern-Gerlach devices swapped. c) 4 possible triples of exact orientations for SG1, SG2 and SG3 in a) are shown, consistent with the nominal orientations of the Stern-Gerlach devices and with the rationality constraints for experiment a). One set of triples, shown as $a^*$, $b^*$ and $c^*$, labels a particle's set of hidden variables. By the Impossible Triangle Corollary, these constraints make experiment b) impossible when applied to the particle with the same hidden variables as in a). Specifically, the cosine of the relative angle between $a^*$ and $c^*$ cannot be rational when the cosines of relative angles between $a^*$ and $b^*$ and between $b^*$ and $c^*$ are rational. In this way, the quantum notion of non-commutativity can be explained from the properties of numbers.}}
\label{SG}
\end{figure}

For any particular run of this experiment, there exist 3 \emph{exact} orientations referred to as $a^*$, $b^*$ and $c^*$ respectively. In the discretised model, the cosines of the relative orientations between $a^*$ and $b^*$ and between $b^*$ and $c^*$ must be rational. In Fig \ref{SG} c) is shown 4 (of potentially many) different possible triples of exact orientations satisfying these rationality constraints. They correspond to 4 different runs of $SG_1-SG_2-SG_3$, i.e., four different particles with four different \emph{exact} measurement orientations. As discussed further below, these  \emph{exact} measurement settings can be used to define a particle's unique hidden variables.

As with the Mach-Zender experiment where we performed an interference measurement in one run, and a which-way measurement in another run, so here it is possible to perform $SG_1-SG_2-SG_3$ in one run and $SG_1-SG_3-SG_2$ in a second run (Fig \ref{SG}b). In the second run, the cosines of the relative orientations between the exact settings of $SG_1$ and $SG_3$ and between the exact settings of $SG_3$ and $SG_2$ must both be rational. For this second run (unlike the first) we do \emph{not} require that the cosine of the relative orientation between $SG_1$ and $SG_2$ is rational. 

The essential nature of non-commuting quantum observables lies in the impossibility of \emph{simultaneously} performing these two sequential SG experiments a) and b) \emph{on the same particle}, i.e. keeping the particle's hidden variables fixed. According to the definition above, keeping the hidden variable fixed in going from a) to b) - from the real-world run to a counterfactual run - we keep $a^*$, $b^*$ and $c^*$ fixed. But now we run into a number theoretic inconsistency. This is a corollary of Niven's Theorem, referred to as the Impossible Triangle Corollary. 

\begin{corollary}
Let $\triangle{a^*c^*c^*}$ be a non-degenerate triangle on the unit sphere with rational internal angles such that the cosines of the angular distances between $a^*$ and $b^*$ and between $b^*$ and $c^*$ are both rational. Then the cosine of angular distance between $a^*$ and $c^*$ cannot be rational.   \end{corollary}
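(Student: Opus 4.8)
The plan is to reduce the claim to Niven's theorem by way of the spherical law of cosines. Regard $a^*,b^*,c^*$ as unit vectors, write the three angular distances as $\theta_{ab},\theta_{bc},\theta_{ac}$ (so $\cos\theta_{ab}=a^*\cdot b^*$, etc.), and let $\beta$ denote the internal angle of the triangle at the vertex $b^*$ --- the vertex shared by the two sides whose cosines are assumed rational. By hypothesis $\beta$ is a rational angle, i.e. $\beta/2\pi\in\mathbb Q$. The identity to use is the spherical law of cosines at $b^*$,
\[ \cos\theta_{ac}=\cos\theta_{ab}\cos\theta_{bc}+\sin\theta_{ab}\sin\theta_{bc}\cos\beta, \]
equivalently $\cos\beta=\bigl(\cos\theta_{ac}-\cos\theta_{ab}\cos\theta_{bc}\bigr)\big/\sqrt{(1-\cos^2\theta_{ab})(1-\cos^2\theta_{bc})}$, the denominator being nonzero since non-degeneracy forces $b^*\neq\pm a^*$ and $b^*\neq\pm c^*$.

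First I would argue by contradiction: suppose $\cos\theta_{ac}\in\mathbb Q$. Together with $\cos\theta_{ab},\cos\theta_{bc}\in\mathbb Q$ this makes the numerator in the formula for $\cos\beta$ rational and the denominator the square root of a positive rational, so $\cos^2\beta\in\mathbb Q$ and hence $\cos2\beta=2\cos^2\beta-1\in\mathbb Q$. But $2\beta$ is again a rational angle, so Niven's theorem forbids $\cos2\beta$ from being rational unless $2\beta$ is exceptional; in this paper's discretisation those exceptions are absent (the construction takes $n_X$ odd and $p=4N+n_X$ not divisible by $3$, which removes from $S_z$ every angle with $\cos2\beta\in\{0,\pm\tfrac12,\pm1\}$ apart from $2\beta\equiv0$). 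Hence $2\beta\equiv0\pmod{2\pi}$, so $\beta\in\{0,\pi\}$ and $a^*,b^*,c^*$ lie on a common great circle --- contradicting non-degeneracy. Therefore $\cos\theta_{ac}\notin\mathbb Q$. The same computation carried out at $a^*$ or $c^*$, with $\alpha$ or $\gamma$ in place of $\beta$, furnishes parallel proofs, which is why it is natural to require all three internal angles to be rational.

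The step I expect to be the real obstacle --- and the reason the hypotheses have to be phrased with care --- is the appeal to the absence of Niven's exceptional values. Without it the corollary is false: the spherical octant, with all three sides $\pi/2$ (so all side-cosines $0$) and all three internal angles $\pi/2$ (a rational angle), is a non-degenerate counterexample, and so, less symmetrically, is the configuration $\cos\theta_{ab}=0$, $\cos\theta_{bc}=-\tfrac13$, $\beta=\pi/4$, which yields $\cos\theta_{ac}=\tfrac23\in\mathbb Q$. A fully rigorous write-up must therefore either carry forward the standing assumption on $p$ from Section~\ref{singlequbit} or state the corollary relative to the exception-free form of Niven's theorem fixed there; once that is done the reduction above is routine. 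A minor bookkeeping point is to make explicit what ``non-degenerate'' excludes --- coincident, antipodal, and great-circle-collinear triples --- since that is exactly what is needed to discard $\beta\in\{0,\pi\}$ at the end.
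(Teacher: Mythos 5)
Your proof is correct and follows essentially the same route as the paper's: the spherical cosine rule at $b^*$, the squaring step to get $\cos^2\phi_{b^*}\in\mathbb{Q}$ and hence $\cos 2\phi_{b^*}\in\mathbb{Q}$, and the appeal to Niven's theorem for the rational angle at that vertex. Your treatment is in fact slightly more careful than the paper's, which only excludes Niven's exceptional values in a remark \emph{after} the proof, whereas you make that exclusion an explicit hypothesis and exhibit the spherical octant as a genuine counterexample without it.
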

\begin{proof}
Assume otherwise. Use the cosine rule for $\triangle (a^*b^*c^*)$,
\be
\label{cosinerule}
\cos \theta_{a^*c^*}= \cos \theta_{a^*b^*} \cos \theta_{b^*c^*} + \sin \theta_{a^*b^*} \sin \theta_{b^*c^*} \cos\phi_{b^*}
\ee
where $\theta_{a^*c^*}$ denotes the angular distance between $a^*$ and $c^*$ on the unit sphere, etc and $\phi_{b^*}$ is the internal angle at the vertex $b^*$. Since $\cos\theta_{a^*b^*}$ and $\cos \theta_{b^*c^*}$ are both rational, then from (\ref{cosinerule}), $\sin \theta_{a^*b^*} \sin \theta_{b^*c^*} \cos \phi_{b^*}$ must be rational. Squaring, $(1-\cos^2 \theta_{a^*yb^*})(1- \cos^2 \theta_{b^*c^*}) \cos^2 \phi_{b^*}$ must be rational. Again, since $\cos \theta_{a^*b^*}$ and $\cos \theta_{b^*c^*}$ are both rational, $\cos^2 \phi_{b^*}$ and hence $\cos 2 \phi_{b^*}$ must be rational. Since $\triangle (a^*b^*c^*)$ is non-degenerate $\cos^2 \phi_{b^*}$ must be irrational by Niven's theorem. Hence $\cos \theta_{a^*c^*}$ must be irrational. 
\end{proof}
As discussed at the end of Section \ref{singlequbit}, we assume the bit strings include null measurements to exclude exact multiples of $\pi/4$ and $\pi/3$ in Niven's theorem. 
 
As illustrated in Fig \ref{SG} c), we label one of four possible triples of exact points where both  $\cos \theta_{a^*b^*}$ and  $\cos \theta_{b^*c^*}$ are rational. For this, $\cos \theta_{a^*c^*}$ rational. Hence the hypothetical experiment in Fig \ref{SG} b), counterfactual to the real-world experiment in Fig \ref{SG} a) and using the hidden variable as in a), is not consistent with the rationality conditions of the discretised model. We can therefore explain the non-commutativity of spin from properties of numbers in a discretised model of quantum physics. 

\subsection{Uncertainty}
\label{uncertainty}

\begin{figure}
\centering
\includegraphics[scale=0.3]{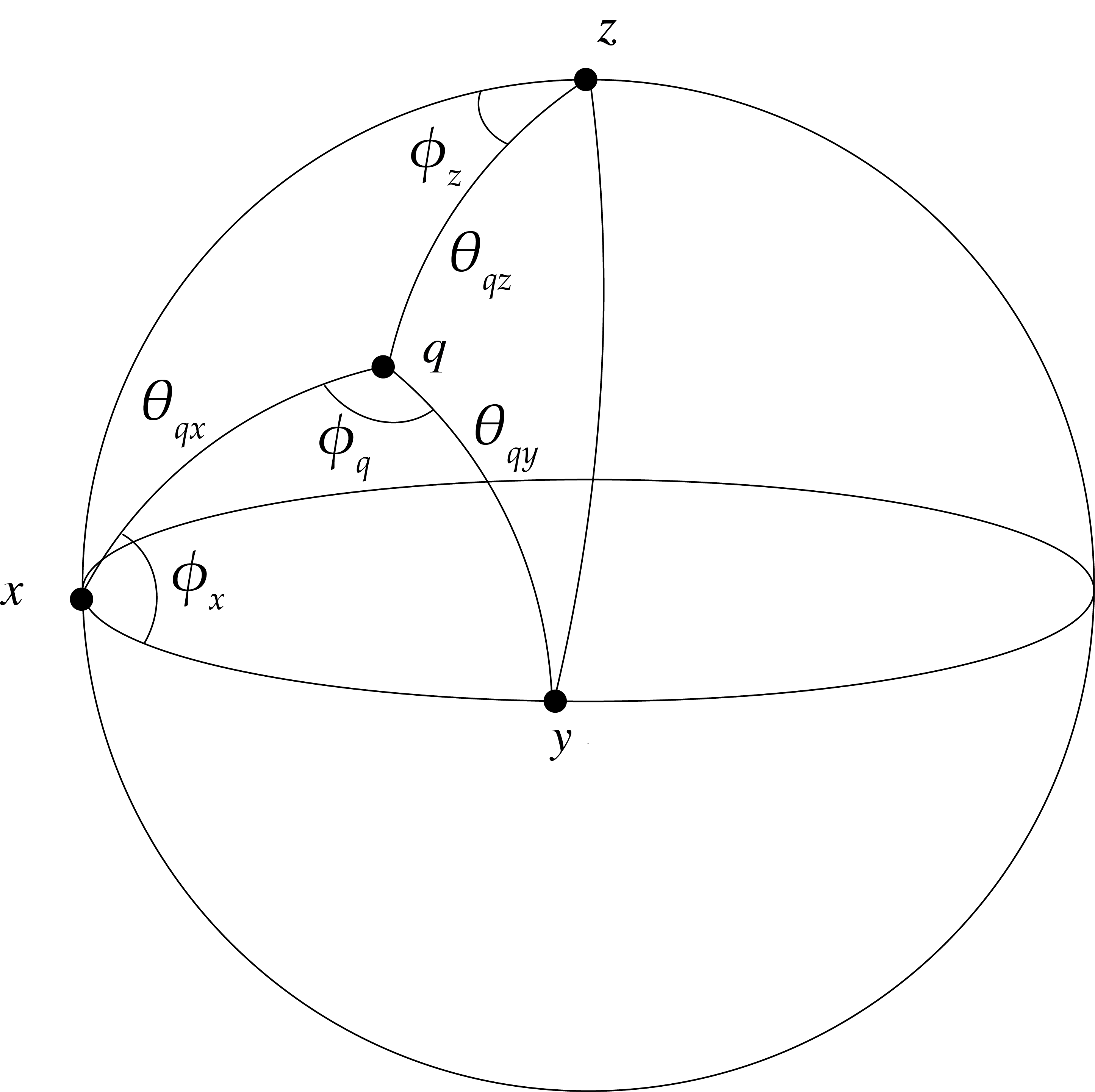}
\caption{\emph{Analysis of the properties of spherical triangles formed from the point $q$ and the nominally orthogonal triple of points, $z$, $x$ and $y$ provides a number-theoretic representation of the Uncertainty Principle, both statistical and ontological.}}
\label{sphtri}
\end{figure}

Consider an exact point $q$ on the Bloch sphere lying on a skeleton with north pole at $z$, where $x$ and $y$ are nominally orthogonal points on the equator (Fig \ref{sphtri}). Using the sine rule for the spherical triangle $\triangle(q, x, y)$ to nominal accuracy
\be
\frac{\sin \phi_x}{\sin \theta_{qy}}=\frac{\sin \phi_q}{1}
\ee
Taking the modulus we therefore have
\be
\label{ineq}
|\sin \phi_x| \le |\sin \theta_{qy}|
\ee
Using the cosine rule for the spherical triangle $\triangle(q, x, z)$
\be
\cos \theta_{qz}= \sin \theta_{qx} \cos (\frac{\pi}{2}- \phi_x)=  \sin \theta_{qx} \sin \phi_x
\ee
Taking absolute values and substituting in (\ref{ineq}) then
\be
\label{uncert1}
|\sin \theta_{qx}| |\sin \theta_{qy}| \ge |\cos \theta_{qz}|
\ee
Now as discussed above, the mean value of the bits in the bit string on $S_z$ at co-latitude $\theta_{qz}$ equals $\cos \theta_{qz}$, whilst it is easily shown that $|\sin \theta_{qz}|$ equals the standard deviation. Writing $\cos \theta_{qz}= \bar S_z$, $\sin \theta_{qx}=\Delta S_x$ and $\sin \theta_{qy}=S_y$ then
\be
\label{uncert2}
\Delta S_x\Delta S_y \ge |\bar S_z|
\ee
If the elements of the bit string were set equal to $\pm\hbar/2$ instead of $\pm 1$, then (\ref{uncert2}) would be replaced by:
\be
\label{uncert3}
\Delta S_x\Delta S_y \ge \frac{\hbar}{2} |\bar S_z|
\ee
which is of course the expression of the Uncertainty Principle for qubits. 

But there is more to the Uncertainty Principle than this. Let $z^*$, and $x^*$ denote exact orientations inside the nominal orientations. If the cosines of the angular distances between $z^*$ and $x^*$ and between $q$ and $z^*$ are rational, then the cosine of the angular distance between $q$ and $x^*$ will not be rational. Similarly if we replace $x^*$ with $y^*$. That is to say, using the Impossible Triangle Corollary, if the discretised quantum state is defined at $q$ relative to a precise North Pole at $z^*$, it cannot simultaneously be defined relative with respect to a rotated North Pole at $x^*$ or $y^*$. 

Hence, this number-theoretic version of the Uncertainty Principle accounts for both the statistical relationship governing the mean and standard deviation of measurements on ensembles of particles, and the inherent contextuality of quantum spin. 

\subsection{Schr\"{o}dinger Evolution and Measurement.}
\label{evolution}

The distinct nature of unitary evolution and of measurement is straightforwardly described in this discretised model of quantum physics. Consider a quantum mechanical wavefunction prepared in state $|\psi(t_0)\rangle$, which evolves through a series of unitary transformations
\be
|\psi(t_n)\rangle = \mathcal U(t_j, t_{j-1}) \ \mathcal U(t_{j-1}, t_{j-2})\ldots \mathcal U(t_{2}, t_{1})\ \mathcal U(t_1, t_{0})
|\psi(t_{0})\rangle= \mathcal U(t_j, t_0) |\psi(t_0)\rangle
\ee
where $\mathcal U(t_j, t_{j-1})=e^{i H(t_j-t_{j-1})}$  and $H$ is the relevant Hamiltonian operator for the Schr\"{o}dinger equation. 

In the discretised model, the prepared state $|\psi (t_0))\rangle$ is associated with the bit string $\mathbbm 1$ at the north pole $z=z(t_0)$ of the Bloch sphere skeleton $S_{z(t_0)}$. $|\psi(t_1)\rangle$ is then associated with the bit string $i_1^{(m(t_1), n(t_1))}(\mathbbm 1)$ associated with the point $z(t_1)$ on $S_{z(t_0)}$ with colatitude/longitude  $(\theta(t_1), \phi(t_1))$, where $\cos^2 \theta(t_1)/2=1-m(t_1)/2N$ and $\phi(t_1)=2\pi n(t_1)/4N$. Continuing, we associate $|\psi(t_2)\rangle$ with the bit string 
\be
i_1^{(m(t_2), n(t_2))}( i_1^{(m(t_1), n(t_1))}(\mathbbm 1))
\ee
at the point $z(t_2)$ with colatitude/longitude $(\theta(t_2), \phi(t_2))$ relative to the Bloch sphere skeleton $S_{z(t_1)}$, where $\cos^2 \theta(t_2)/2=1-m(t_2)/2N$ and $\phi(t_2)=2\pi n(t_2)/4N$. In this way, $|\psi(t_n)\rangle$ is associated with the bit string
\be
\label{unitaries}
i_1^{(m(t_j), n(t_j))}(i_1^{(m(t_{j-1}), n(t_{j-1}))}( \ldots ( i_1^{(m(t_2), n(t_2))}(i_1^{(m(t_1), n(t_1))} (\mathbbm 1)))\ldots). 
\ee

We now come to the final measurement step. What distinguishes the measurement step from just another unitary step? The key point is that during each of the unitary steps, the bit strings are strictly ordered. Because of this ordering, it is always possible to undo $(\ref{unitaries})$ with inverse transformations, returning the ensemble state to $\mathbbm 1$. However, in the measurement step, the bit string is simply a disordered mixture.  

To make sense of this, we need to turn to a more geometric representation of these bit strings. To motivate such a geometric representation, consider first the possible objection that since the discretised model makes an ontological distinction between states with rational angles and states with rational cosines of angles, the model is unrealistically fine tuned. However, to make such an objection requires a specification of metric with respect to which the tuning is deemed fine. According to Ostrowsky's theorem \cite{Katok}, there exist two classes of norm-induced metric: the Euclidean metric and the $p$-adic metric. The set of $p$-adic integers is homeomorphic to Cantor sets with $p$ iterated pieces \cite{Katok}. This hints at the existence of some global fractal geometry $\mathcal I_U$ in state space to which states of the universe are constrained. Motivated by chaos theory, we call this the `invariant set' geometry in the sense that $\mathcal I_U$ is invariant under a dynamical evolution of states \cite{Palmer:2009a}. Points which lie off $\mathcal I_U$ are inconsistent with the geometric constraint and are $p$-adically distant from points on $\mathcal I_U$, no matter how close such points may seem to be from a Euclidean perspective. We can state this in a related way. Associated with $I_U$ is a non-trivial non-zero (Haar) measure. The states of the world which do not lie on $I_U$ have zero invariant measure \cite{Hance2022Supermeasured}. The geometric structure of $\mathcal I_U$ will be analysed further in Section \ref{evolution} when comparing unitary evolution and measurement from the discretised model perspective. 

Fig \ref{invariantset} illustrates the difference between unitary evolution and measurement from this geometric perspective. Fig \ref{invariantset} a) shows a state-space trajectory segment. At some low-level of iteration of $\mathcal I_U$ it appears a 1D curve. On magnification it is seen to comprise a helix of $4N$ trajectories. For simple unitary transformations where we write $e^{iH \Delta t}=e^{i \omega \Delta t}$, the unitaries (cyclic permutations of the bit strings) can be represented geometrically in terms of a helix of trajectories rotating with frequency $\omega$. In the relativistic Dirac equation, we would simply add a second helix rotating in the opposite direction (giving representation of the left- and right-handed Weyl spinors). 

A cross section through a trajectory segment would reveal a fractal structure homeomorphic to a Cantor set with $p$ iterated pieces, as shown in Fig \ref{invariantset} b. In Fig \ref{invariantset}c we associate measurement with a divergence of trajectories associated with the interaction of the quantum system with its environment, and a (nonlinear) clustering of trajectories into discrete clusters, which we label $1$ and $-1$.  If we form bit strings after the trajectories have clustered, then there is no natural order in the bit strings - they have become disordered. This can be contrasted with the helical structure in the unitary phase where the trajectories are ordered. Note that we can label the trajectories in the unitary phase of evolution in Fig \ref{invariantset}a with the clusters to which they subsequently evolve. This of course does not imply any kind of retrocausality. The cluster labels merely provide a convenient means to express the ordering of the trajectory segments. 

\begin{figure}
\centering
\includegraphics[scale=0.3]{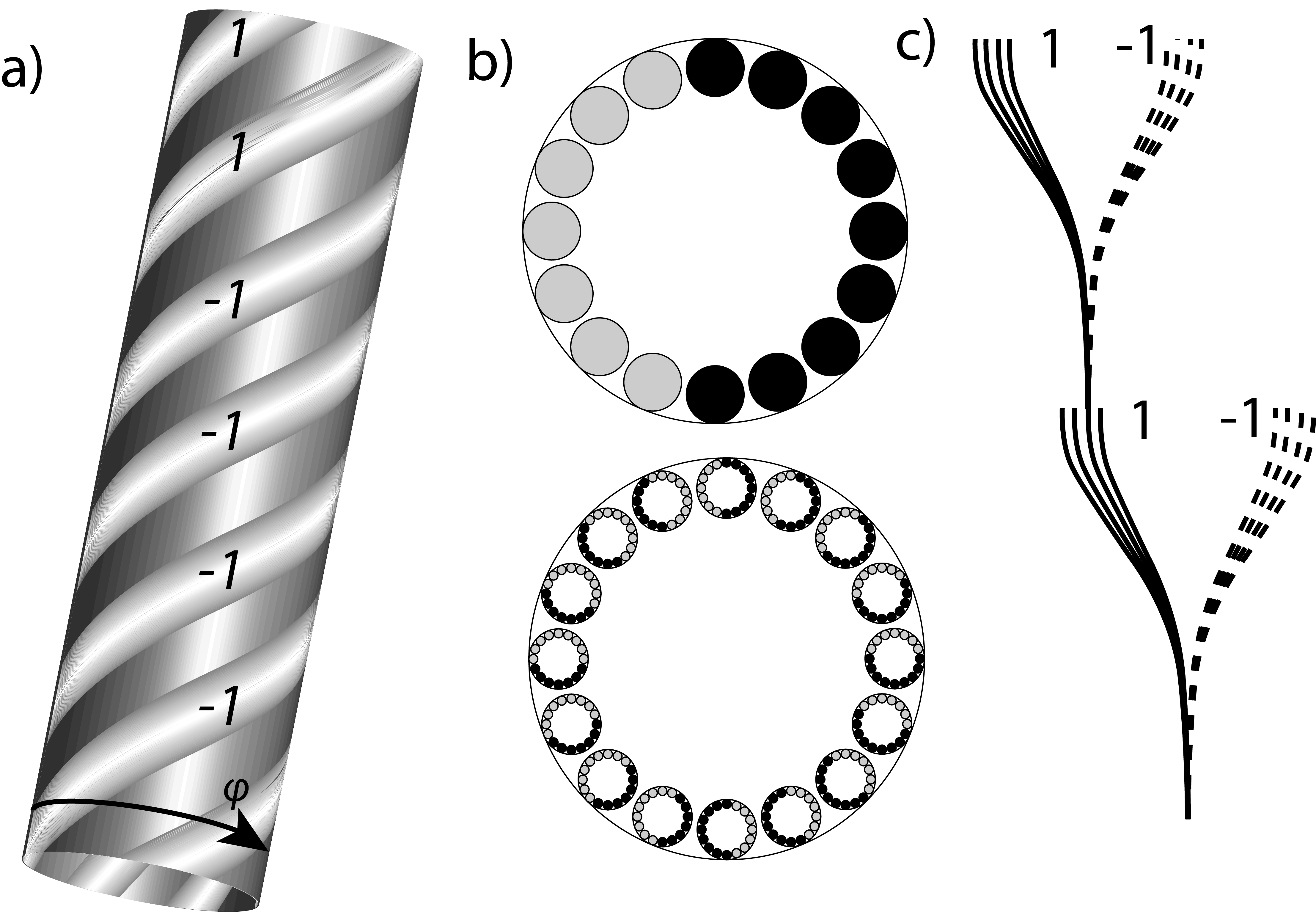}
\caption{\emph{Local state space geometry of the invariant set $I_U$. a) a state space trajectory which appears a 1D curve at a low iterate of the fractal geometry is found to comprise a helix of $4N$ trajectories (whose periodicity is related to the energy of the quantum system). b) A cross section through a trajectory is a Cantor set with $4N+n_X$ iterated pieces, homeomorphic to the $4N+n_X$-adic integers. c) On the invariant set, measurement is associated with a divergence of trajectories as a quantum system interacts with its environment, followed by a nonlinear (potentially gravitationally induced) clustering. In a) and b) each trajectory is labelled/coloured by the cluster to which it evolves in c).}}
\label{invariantset}
\end{figure}

This divergence of trajectories can be thought of as part of a `fractal zoom' and reveals the trajectory segments at the next level of iteration. In this sense the passage of time can itself be thought of as a fractal zoom of the invariant set $I_U$.  

These clusters are not the only way to label trajectories symbolically. For example, we can partition state space as the union of subspaces where a particle either went through the upper arm of a Mach-Zehnder device, or through the lower arm. We can form strings comprising bits which are the symbolic labels associate with such a partition. Hence, according to this version of the discretised model, a particle goes through the upper arm or the lower arm. It does not go through both arms simultaneously. From this perspective, the `non-classical' nature of quantum physics arises from the assumption that the laws of physics describe a geometry in state space. This geometry embodies information about the structure of neighbouring counterfactual trajectories where the particle went through the other arm of the interferometer. This is why it is necessary to invoke counterfactual computation \cite{MitchisonJozsa} to understand the power of quantum computers. By contrast, the laws of classical physics merely describe information about how systems evolve on a single (least action) trajectory. In quantum physics one cannot think of the trajectories in isolation as one can in classical physics. 

In this picture of measurement, the wavefunction does not `collapse' in the usual quantum sense of the word. By contrast, the wavefunction is always just an ensemble description of reality. On the other hand, the transition from the ordered bit string description to a disordered description (from helix to discrete clusters) clearly does signal some loss of information associated with what we call the `measurement process'. 

Above we have identified $p$ with the number $4N$ of Section \ref{IST}. However, for algebraic reasons, it is usually assumed that $p$ is prime. However, as discussed in Section \ref{IST}, there will exist trajectories which do not cluster, e.g., if they lie on the boundary between clusters. If we write $p=4N+n_X$ where $n_X$ denotes a number of such null states, then it becomes possible for $p$ to be prime. If $n_X =1$, indicating a single null trajectory at the boundary between the partition into $1$ and $-1$, then $p$, if prime, would be a Pythagorean prime. The $p$-adic theory underpinning the symbolic bit string representations of the quantum state has not been developed in sufficient detail to be able to make more precise observations of the relationship between $p$ and $N$ than this. However, numerical estimates of $N$ and hence $p$ are made in Section \ref{sizeofN}. 

With this geometric link established, the discretised model is henceforth referred to by the name `invariant set theory' \cite{Palmer:2020}. 

\subsection{The Quantum Mechanical and Classical Limits of Invariant Set Theory and the size of $N$}
\label{sizeofN}

The quantum mechanical limit of invariant set theory occurs when $N$ is set equal to infinity. At this limit, the whole of complex Hilbert space is part of the state space of the theory, the rationality conditions become nullified and we are simply left with the state space of quantum mechanics. However, it is important to understand that $N=\infty$ is a singular limit. Invariant set theory does not steadily approach quantum mechanics as $N$ gets bigger and bigger, since the constraints of Niven's theorem holds no matter how large is $N$. Berry \cite{Berry} has noted that singular limits are commonplace in physics and that frequently an old theory is a singular limit of a new theory as a parameter of the new theory is set equal to zero or infinity. 

How large is $N$ if is not infinite? One can start by considering a minimum size $\epsilon_0$ for an $\epsilon$-disk. We know that attempts to measure a quantum system on scales smaller than the Planck scale will require such a concentration of energy as to form a black hole. This suggests that $\epsilon_0$ and hence the largest value $N$ could be is set by gravity. This in turn suggests that $N$ is a function of the energy $E$ of the quantum system under consideration, i.e., $N \sim E/ E_{\text{Planck}}$. For an infrared photon this would give $N \sim10^{26}$ and for a quantum system with Compton wavelength equal to the diameter of the observable universe ($N \sim10^{62}$). 

The classical limit of invariant set theory occurs at $N=1$, when the fractal structure of trajectories disappears. Then we simply revert to single state-space trajectories without internal structure. This would occur for quantum systems where $E \ge E_{\text{Planck}}$, i.e. for particles whose mass exceeds about 22 $\mu$g (for comparison, a tardigrade has mass about 10 $\mu$g). In this sense, invariant set theory predicts that quantum systems with detectable self gravitation will not show quantum mechanical properties. The notion that $N$ (and hence $p$) is in part set by the strength of gravity is consistent with the conjecture that the clustering of trajectories on $I_U$, associated with the measurement process, is itself a gravitational phenomenon \cite{Diosi:1989} \cite{Penrose:2004}.  

In Section \ref{K}, it was noted that a general $K$ qubit state was described by $K$ bit strings of length $2^{K+1}N$. If we take the largest value of $2^{K+1}N$ to be $10^{62} \sim 2^{205}$, and the smallest value of $N$ as 1, then this will imply a maximum value $K_{\text{max}}$ of $K$ to be around $200$ (before the degrees of freedom become restricted by the maximum length of a bit string). Of course the size of the observable universe is not the same as the size of the universe, and we do not even know if the latter is finite. However, if it is finite, invariant set theory predicts that the exponential increase in the number of freedom will eventually cease. Some estimates based on Bayesian model averaging suggests a largest size at least some 250 greater than the observable size. However, a factor 250 only increases $K_{\text{max}}$ by about 8. Hence, depending on the size of the universe, it may not be possible to build a general purpose quantum computer with more than a few hundred qubits. 

\section{Entangled qubits and nonlocality}
\label{entangled}

\subsection{Bell's Theorem}
\label{Bell}

In quantum mechanics, the Bell state is given by 
\be
\label{bell}
|\psi_{\text{Bell}}\rangle = \frac{1}{\sqrt 2} \big( |\ 1\rangle |\ -1\rangle \  +\  |-1\rangle |1\rangle \big) 
\ee
In invariant set theory, consistent with Section (\ref{K}), $|\psi_{\text{Bell}}\rangle $ is associated with a pair of bit strings of length 8N (one for experimenter Alice and one for experimenter Bob)
\begin{align}
\label{alicebobbits}
\mathcal B_a&=\ \ i_1^{(m_a)} (\ \ \mathbbm 1)\ || \  i_1^{(m_a)}(-\mathbbm 1) 
\nonumber \\
\mathcal B_b&= \ \ i_1^{(m_b)} (-\mathbbm 1)\ || \  i_1^{(m_b)}(\ \ \ \mathbbm 1)
\end{align}
From (\ref{alicebobbits}) it is straightforwardly seen that $\mathcal B_a$ and $\mathcal B_b$ contain equal numbers of $1$ and $-1$ bits for all values of the free parameters $m_a$ and $m_b$. In addition, the correlation between $\mathcal B_a$ and $\mathcal B_b$ is equal to $|m_b-m_a|/N-1$. As such, we can represent $\mathcal B_a$ and $\mathcal B_b$ as a pair of points on a 2-sphere whose angular separation is $\theta_{ab}$ such that $\cos \theta_{ab}=1- |m_b-m_a|/N \in \mathbb Q$. These two points will correspond to the (exact) orientations of the experimenters' Alice and Bob's measuring apparatuses. 

For each run of a CHSH experiment, the experimenters Alice and Bob choose one of two possible spin measurements to perform, denoted by $x$ and $y$, where $x, y \in \{0,1\}$. Such $x$ and $y$ define the nominal orientations of SG devices/polarisers. As before, we treat these choices $x=0$, $x=1$, $y=0$ and $y=1$ as four small $\epsilon$-disks on the 2-sphere of measurement orientations. We label Alice and Bob's measurement outcomes as $a$ and $b$ where $a, b \in \{1,-1\}$. From (\ref{alicebobbits}), for any run of the experiment there exist a pair of exact measurement orientations whose relative orientation satisfies $\cos \theta_{ab}= 1-|m_b-m_a|/N$. As before, although the discretised model requires $\cos \theta_{ab}$ to be rational, Alice and Bob do not have any \emph{direct} control over the rationality or irrationality of $\cos \theta_{ab}$. Rather rationality is inevitably associated with the model's representation of the real world. 

Since we can represent quantum mechanical Bell states to arbitrary precision in invariant set theory, then, for large enough $N$, the violation of Bell's inequality is as close to quantum mechanics as we like. In the rest of this section we show that this comes about from a violation of the Statistical Independence (SI) assumption 
\be
\label{SI}
\rho(\lambda | x \; y)=\rho(\lambda)
\ee
where $\rho(\lambda)$ is a probability density on the hidden variables. In particular, it does \emph{not} come about from a violation of local causality
\be
\label{locality}
a=A(\lambda, x);\ \ \ b=B(\lambda, y)
\ee
where $A$ and $B$ denote deterministic functions and $\lambda$ hidden variables. The locality condition expresses the idea that the outcome of Alice's measurements do not depend on Bob's choices or vice versa. As discussed, the violation of Statistical Independence is non-conspiratorial, that is to say, it does not depend on Alice and Bob making `just the right choices'. Instead there is nothing in the analysis below to prevent Alice and Bob from chosing as they like. In this sense the hidden variables are not akin to some `alien mind control' as some observers imagine a violation of SI to be. 

From (\ref{locality}), the hidden variables are those which help determine measurement outcomes, but over which the experimenters have no direct control. Following the analysis of the Mach-Zehnder and the Stern-Gerlach experiments, the hidden variable of Alice's particle is associated with the exact measurement setting $x^*$ relative to the $\epsilon$-disk $x$ (the hidden variable associated with Bob's particle is associated with the exact setting $y^*$ relative to $y$). 

\begin{figure}
\centering
\includegraphics[scale=0.5]{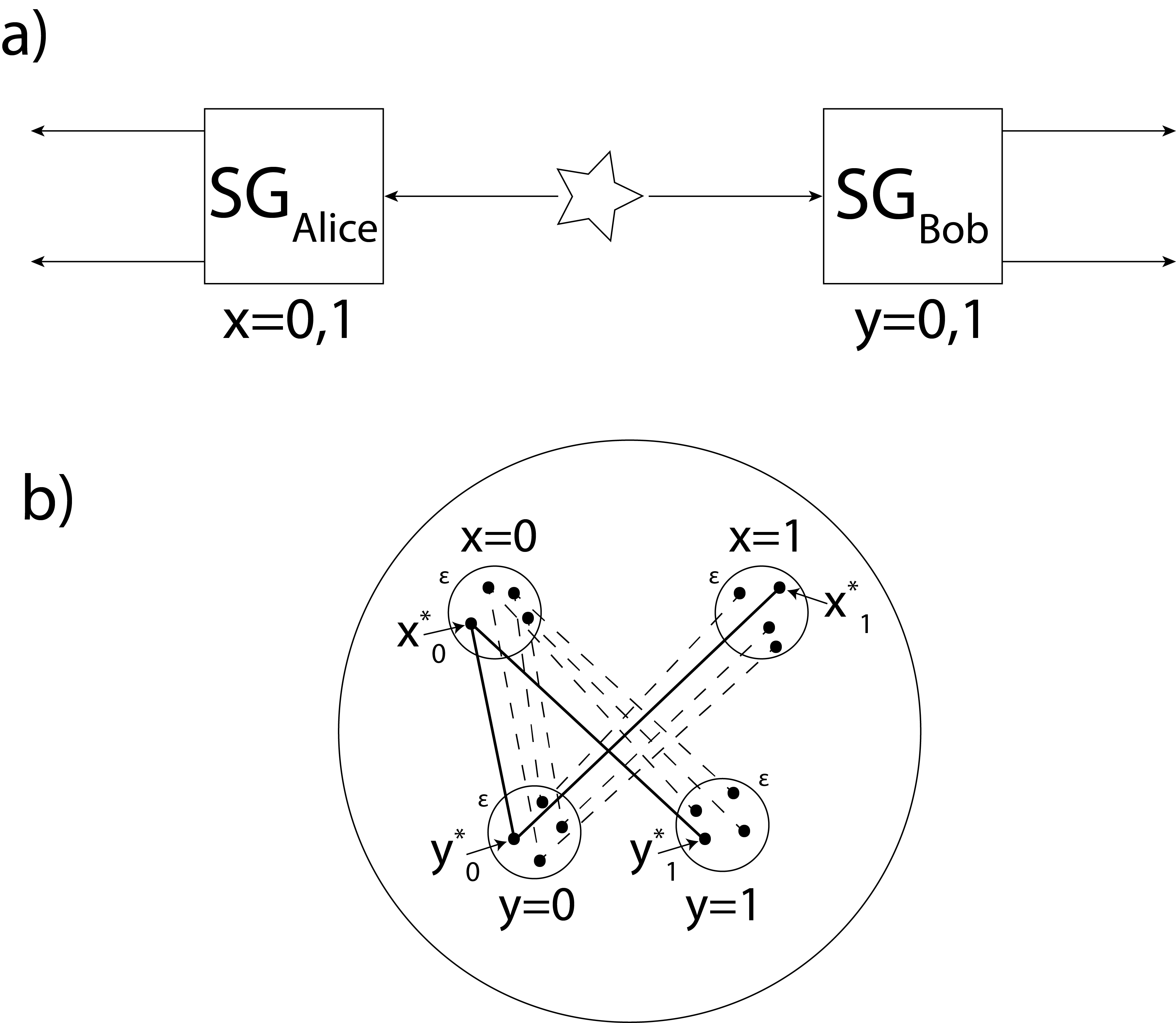}
\caption{\emph{a) An schematic set up for the CHSH Bell experiment with nominal measurement settings $x \in \{0,1\}$ for Alice and $y \in \{0,1\}$ for Bob. b) For a particular run, Alice chooses $x=0$ and Bob $y=0$. The points $x^*_0$, $y^*_0$, $x^*_1$, $y^*_1$, one within each of the $\epsilon$ disks, denote exact settings for some particular value of $\lambda$. The cosines of the angular distances between $x^*_0$ and $y^*_0$, between $x^*_0$ and $y^*_1$, and between $x^*_1$ and $y^*_0$, are all rational. However, given this, the cosine of the angular distance between $x^*_1$ and $y^*_1$ cannot be rational. Hence, with $\lambda$ fixed (i.e., $x^*_0$, $y^*_0$, $x^*_1$, $y^*_1$ fixed), $0=\rho(\lambda | x=1\; y=1) \ne \rho(\lambda | x=0 \; y=0) \ne 0$ and Statistical Independence (\ref{SI}) is violated. Alice's exact measurement setting $x^*_0$ and hence measurement outcome (when she chooses $x=0$) does not depend on whether Bob chooses $y=0$ or $y=1$. Hence the model satisfies the locality condition (\ref{locality}). 
}}
\label{CHSHfig}
\end{figure}

Fig \ref{CHSHfig}b illustrates a particular run of the experiment (particular $\lambda$) where Alice chooses $x=0$ and Bob $y=0$. According to invariant set theory there exist pairs of points inside the $\epsilon$-disks associated with $x=0$ and $y=0$, such that the cosine of the angular distance between the exact points is rational. There will be as many such points as we like for large enough $N$. Let $x^*_0$ and $y^*_0$ denote two possible exact points. Hence $\rho(\lambda |x=0 \; y=0) \ne 0$ in the model, consistent with what happens in the real world, where $\lambda$ is associated with the pair $\{x^*_0, y^*_0\}$. 

Consider now the counterfactual world where, for the same particle pair, i.e., the same $\lambda$, Alice chose $x=0$, as she did in the real world, but Bob chose $y=1$. If SI holds, then $\rho(\lambda |x=0 \; y=1)=\rho(\lambda |x=0 \; y=0)\ne 0$. According to the discussion above, fixing $\lambda$ \emph{and} fixing $x=0$ means fixing the exact point $x^*_0$ in the $\epsilon$-disk $x=0$. If $\rho(\lambda |x=0 \; y=1) \ne 0$, then fixing $\lambda$ and varying $y=0$ to $y=1$, there must exist an exact point $y^*_1$ in the $\epsilon$ disk $y=1$ such that the cosine of the angular distance between $x^*_0$ and $y^*_1$ is rational. As before there are many such possibilities, Fig \ref{CHSHfig} shows one choice for $y^*_1$. Hence, for Bob's counterfactual choice $y=1$ we can additionally associate $\lambda$ with the exact orientation $y^*_1$, and hence $\lambda$ is now associated with the triple $\{x^*_0, y^*_0, y^*_1\}$. With this, $\rho(\lambda |x=0 \; y=1) \ne 0$ consistent with (\ref{SI}). 

Consider now a second counterfactual world where, again keeping $\lambda$ fixed, Alice chose $x=1$ and Bob $y=0$. Keeping $\lambda$ fixed, we keep fixed the exact orientation $y^*_0$ in the $\epsilon$-disk $y=0$. If $\rho(\lambda |x=1 \; y=0) \ne 0$, then fixing $\lambda$ and varying $x=0$ to $x=1$, there must exist an exact orientation $x^*_1$ in the $\epsilon$ disk $x=1$ such that the cosine of the angular distance between $y^*_0$ and $x^*_1$ is rational. As before there are many such possibilities, Fig \ref{CHSHfig}b shows one choice for $x^*_1$. Hence, for Alice's counterfactual choice $x=1$ we can associate $\lambda$ with the exact choice $x^*_1$ and hence $\lambda$ is now associated with the quadruple $\{x^*_0, x^*_1, y^*_0, y^*_1\}$. With this, $\rho(\lambda |x=1, y=0) \ne 0$ consistent with (\ref{SI}). 

So far SI has not been violated. However, finally, consider the third counterfactual world where, keeping $\lambda$ fixed, Alice chose $x=1$ and Bob $y=1$. Now both of the exact orientations inside the $\epsilon$ disks $x=1$ and $y=1$ have already been fixed (see Fig \ref{CHSHfig}b). Hence we have no free choices left. If $\rho(\lambda |x=1 \; y=1) \ne 0$, it must be the case that the cosine of the angular distance between $x^*_1$ and $y^*_1$ is rational. However, this cannot be the case as a straightforward generalisation of the Impossible Triangle Corollary shows. 

\begin{corollary} Let $x^*_0$, $x^*_1$, $y^*_0$, $y^*_1$ denote 4 exact points inside the $\epsilon$ disks, $x=0$, $x=1$, $y=0$, $y=1$ respectively, such that $\cos \theta_{x^*_0 y^*_0}$, $\cos \theta_{x^*_0 y^*_1}$ and $\cos \theta_{x^*_1 y^*_0}$ are rational. If $\phi_{x^*_0}$ and $\phi_{x^*_1}$ are independent rational angles subtended by the great circles at $x^*_0$ and $x^*_1$ then $\cos \theta_{x^*_1 y^*_1} \notin \mathbb Q$. 
\end{corollary}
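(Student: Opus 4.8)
The plan is to run the argument of the Impossible Triangle Corollary on two spherical triangles at once, namely $\triangle(x^*_0 y^*_0 y^*_1)$ and $\triangle(x^*_1 y^*_0 y^*_1)$, which are glued along the common edge $y^*_0 y^*_1$; the hypothesis that $\phi_{x^*_0}$ and $\phi_{x^*_1}$ are \emph{independent} rational angles is exactly what lets the information coming from the two triangles be combined.

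Assume for contradiction that $\cos\theta_{x^*_1 y^*_1}\in\mathbb{Q}$, and write $c=\cos\theta_{y^*_0 y^*_1}$. Applying the spherical cosine rule at the vertex $x^*_0$ of the first triangle, and at the vertex $x^*_1$ of the second, gives
\begin{align}
c &= \cos\theta_{x^*_0 y^*_0}\cos\theta_{x^*_0 y^*_1} + \sin\theta_{x^*_0 y^*_0}\sin\theta_{x^*_0 y^*_1}\,\cos\phi_{x^*_0}, \nonumber \\
c &= \cos\theta_{x^*_1 y^*_0}\cos\theta_{x^*_1 y^*_1} + \sin\theta_{x^*_1 y^*_0}\sin\theta_{x^*_1 y^*_1}\,\cos\phi_{x^*_1}. \nonumber
\end{align}
In the first line the product of cosines is a rational $p_0$ by hypothesis, and the square of the product of the sines is $(1-\cos^2\theta_{x^*_0 y^*_0})(1-\cos^2\theta_{x^*_0 y^*_1})=:d_0\in\mathbb{Q}$, with $d_0>0$ by non-degeneracy of the configuration; hence $c=p_0+\sqrt{d_0}\cos\phi_{x^*_0}$. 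In the second line, now also using the contradiction hypothesis, the same reasoning gives $c=p_1+\sqrt{d_1}\cos\phi_{x^*_1}$ with $p_1\in\mathbb{Q}$ and $d_1\in\mathbb{Q}_{>0}$.

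Two consequences then follow. First, from $c=p_0+\sqrt{d_0}\cos\phi_{x^*_0}$ alone: if $c$ were rational then $\cos^2\phi_{x^*_0}=(c-p_0)^2/d_0$, and hence $\cos 2\phi_{x^*_0}$, would be rational, impossible for the non-exceptional rational angle $2\phi_{x^*_0}$ by Niven's theorem; so $c\notin\mathbb{Q}$. (This sub-step uses none of the contradiction hypothesis, and is essentially the Impossible Triangle Corollary for $\triangle(x^*_0 y^*_0 y^*_1)$.) Second, subtracting the two expressions for $c$ yields the rational identity
\[
\sqrt{d_0}\cos\phi_{x^*_0}-\sqrt{d_1}\cos\phi_{x^*_1}=p_1-p_0\in\mathbb{Q}.
\]

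The main obstacle is to turn this last identity into a contradiction, and this is precisely where ``independence'' is indispensable: without it the identity can hold trivially (take $\phi_{x^*_0}=\phi_{x^*_1}$, $d_0=d_1$, or even $\phi_{x^*_0}=\pi/5$, $\phi_{x^*_1}=2\pi/5$, $d_0=d_1=1$, where both cosines already lie in $\mathbb{Q}(\sqrt5)$). I would read ``$\phi_{x^*_0}$ and $\phi_{x^*_1}$ independent'' as the statement that the fields $\mathbb{Q}(\sqrt{d_0}\cos\phi_{x^*_0})$ and $\mathbb{Q}(\sqrt{d_1}\cos\phi_{x^*_1})$ meet only in $\mathbb{Q}$ (a linear-disjointness condition on the relevant real cyclotomic extensions). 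Granting this, the displayed identity puts $\sqrt{d_0}\cos\phi_{x^*_0}$ inside $\mathbb{Q}(\sqrt{d_1}\cos\phi_{x^*_1})$, hence inside the intersection, hence in $\mathbb{Q}$; then $c=p_0+\sqrt{d_0}\cos\phi_{x^*_0}\in\mathbb{Q}$, contradicting the previous step (equivalently, $\cos^2\phi_{x^*_0}\in\mathbb{Q}$ contradicts Niven directly). This contradiction gives $\cos\theta_{x^*_1 y^*_1}\notin\mathbb{Q}$. The genuinely delicate point — and where I would concentrate the work — is formalising ``independent rational angles'' precisely enough that the linear-disjointness step is a proved fact rather than a hypothesis in disguise; once the exceptional Niven values are excluded by the null-measurement device of Section~\ref{singlequbit}, everything else is routine bookkeeping around Niven's theorem and the spherical cosine rule.
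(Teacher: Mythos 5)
Your proof is correct and takes essentially the same route as the paper's: the same two spherical triangles $\triangle(x^*_0 y^*_0 y^*_1)$ and $\triangle(x^*_1 y^*_0 y^*_1)$ glued along $y^*_0 y^*_1$, the same pair of cosine-rule expressions for $\cos\theta_{y^*_0 y^*_1}$, the same subtraction, the same appeal to ``independence'' to force the two summands to be individually rational, and the same contradiction with Niven's theorem via the rationality of $\cos 2\phi_{x^*_0}$ and $\cos 2\phi_{x^*_1}$. The only difference is that the paper simply asserts the splitting step (``can only be assumed rational if the two individual summands are rational'') without the linear-disjointness gloss you supply, so your version is, if anything, more explicit about the one genuinely load-bearing and under-specified hypothesis.
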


\begin{proof}
As before, we give a proof by contradiction and assume $\cos \theta_{x^*_1 y^*_1}$ is rational. Using the cosine rule for the two triangles $\triangle(x^*_0y^*_0y^*_1)$ and $\triangle(x^*_1y^*_0y^*_1)$, we have two expressions for $\cos \theta_{y^*_0 y^*_1}$:
\begin{align}
\cos \theta_{y^*_0 y^*_1}&=\cos \theta_{x^*_0 y^*_0} \cos \theta_{x^*_0 y^*_1}+\sin \theta_{x^*_0 y^*_0} \sin \theta_{x^*_0 y^*_1} \cos \phi_{x^*_0} \nonumber \\
\cos \theta_{y^*_0 y^*_1}&=\cos \theta_{x^*_1 y^*_1} \cos \theta_{x^*_1 y^*_0}+\sin \theta_{x^*_1 y^*_1} \sin  \theta_{x^*_1 y^*_0} \cos \phi_{x^*_1}
\end{align}
If $\cos \theta_{x^*_0 y^*_0}$, $\cos \theta_{x^*_0 y^*_1}$, $\cos \theta_{x^*_1 y^*_0}$ and $\cos \theta_{x^*_1 y^*_1}$ are all rational, then, taking the difference between these equations, 
\be
\label{diff}
\sin \theta_{x^*_0 y^*_0} \sin \theta_{x^*_0 y^*_1} \cos \phi_{x^*_0}-\sin \theta_{x^*_1 y^*_1} \sin  \theta_{x^*_1 y^*_0} \cos \phi_{x^*_1}
\ee
must be rational. If $\phi_{x^*_0}$ and $\phi_{x^*_1}$ are independent angles, then (\ref{diff}) can only be assumed rational if the two individual summands are rational. Squaring the individual summands then we deduce that $\cos 2\phi_{x^*_0}$ and $\cos 2\phi_{x^*_1}$ must both be rational. As discussed in the Impossible Triangle Corollary, this contradicts the assumption that $\phi_{x^*_0}$ and $\phi_{x^*_1}$ are rational angles. Hence $\cos \theta_{x^*_1 y^*_1} \notin \mathbb Q$. 
\end{proof}
Hence $\rho(\lambda |x=1 \; y=1) = 0$ and SI is violated. 

We now show that this model satisfies locality (\ref{locality}).  In Fig \ref{CHSHfig}b) it can be seen that the exact measurement setting $x^*_0$ - the setting associated with Alice's particle's hidden variable - does not depend on whether Bob chose $y=0$ or $y=1$; either choice is allowed for a fixed $x^*_0$ as evidenced by the two solid lines emanating from $x^*_0$. Similarly, the exact measurement setting $y^*_0$ - Bob's particle's hidden variable - does not depend on whether Alice chose $x=0$ or $x=1$; either choice is allowed for a fixed $y^*_0$. Hence, the measurement outcome $a$ does not depend on Bob's choice $y$, nor $b$ depend on Alice's choice $x$. This is a statement that local causality holds. 

More generally, if $\bar x=1-x$, $\bar y=1-y$, then 
\be
\rho(\lambda |x, y) \ne 0;\ \  \rho(\lambda |x, \bar y) \ne 0;\ \ \rho(\lambda |\bar x, y) \ne 0
\ee
implies
\be
\rho(\lambda | \bar x, \bar y)=0
\ee
It is worth noting that the proof of the violation of Statistical Independence in the original Bell inequality \cite{Bell:1964} can be derived directly from the Impossible Triangle Corollary and does not require the extension above. In particular, the three $\epsilon$-disks in Fig \ref{SG} c) correspond to the three nominal settings in the original Bell inequality. Exactly the same argument applies. 

As mentioned, there is nothing to constrain Alice and Bob's choices of \emph{nominal} settings. They can chose as they like (e.g, on the wavelength of light from distant quasars, or from the Dow Jones index, or any whimsical alternative). However, the exact measurement settings and hence the hidden variables are constrained by these choices. This does not occur in classical mechanics: throwing a stone with a certain nominal speed does not limit the set of exact solutions to the governing equations where the stone was thrown faster or slower. Importantly, this constraint on exact settings does not imply any breakdown of local causality in space-time - once again, the model satisfies the locality condition (\ref{locality}). Quantum physicists may wish to use the word `nonlocality' to describe this global state-space geometric constraint. However, it should not be conflated with some unattractive and implausible action-at-a-distance which was the basis of Einstein's dislike of quantum mechanics. 

A model which violates Statistical Independence is sometimes  called `superdeterministic' \cite{HossenfelderPalmer} though here the term `supermeasured' \cite{Hance2022Supermeasured} is to be preferred as this refers to the notion that the dynamics must respect the invariant measure associated with $I_U$. This violation of Statistical Independence does not imply some kind of fine-tuned conspiracy. For one thing, as mentioned, Alice and Bob can chose as they like. In addition, as discussed above, the natural metric on the state space of this discretised model is $p$-adic and bit strings which do not satisfy the rationality conditions are $p$-adically distant from those that do. This is not a conspiracy; it is an elegant interpretation of an otherwise problematic experiment in quantum foundations. 

\subsection{GHZ}

The discretised model also provides a realistic description of the entangled GHZ state \cite{GHZ}, making use of the number-theoretic incommensurateness relevant to the Mach-Zehnder interferometer as discussed in Section \ref{MachZ}. Here
\be
|\psi_{\text{GHZ}}\rangle=\frac{1}{\sqrt 2}(|v_A\rangle|v_B\rangle |v_C\rangle +|h_A\rangle|h_B\rangle |h_C\rangle)
\ee
describes the entangled state of three photons $A$, $B$ and $C$, in a basis where $v$ and $h$ denote vertical and horizontal polarisation. Through the unitary transformation 
\be
\label{GHZtrans}
\begin{pmatrix}
v' \\
h'
\end{pmatrix}
=
\begin{pmatrix}
\cos \phi & - \sin \phi \\
\sin \phi & \cos \phi
\end{pmatrix}
\begin{pmatrix}
v \\
h
\end{pmatrix}
\ee
linear polarisation measurements can be made on any one of the three photons at angle $\phi$ to the $v/h$ axis. This requires $\cos^2 \phi$ and hence $\cos 2 \phi$ to be rational. Conversely, through the unitary transformation
\be
\label{GHZtrans2}
\begin{pmatrix}
L\\
R
\end{pmatrix}
=
\begin{pmatrix}
1 & -i \\
1 & i
\end{pmatrix}
\begin{pmatrix}
v \\
h
\end{pmatrix}\ \ \ \ \ \ \ \ \ \ 
\ee
it is possible to make circular polarisation measurements on the other two photons.

The resulting correlations are impossible to explain with a hidden variable theory if we assume that the first photon could have been measured relative to the $L/R$ basis. However, from (\ref{GHZtrans}) and (\ref{GHZtrans2})
\begin{align}
\begin{pmatrix}
L\\
R
\end{pmatrix}
=&
\begin{pmatrix}
1 & -i \\
1 & i
\end{pmatrix}
\begin{pmatrix}
\cos \phi & \sin \phi \\
-\sin \phi & \cos \phi
\end{pmatrix}
\begin{pmatrix}
v' \\
h'
\end{pmatrix} \nonumber \\
=&
\begin{pmatrix}
e^{i \phi} & e^{i(\phi-\pi/2)} \\
e^{-i\phi} & e^{-i(\phi-\pi/2)}
\end{pmatrix}
\begin{pmatrix}
v' \\
h'
\end{pmatrix}
\end{align}
As in the discussion in Section \ref{MachZ}, there is a number theoretic inconsistency which prevents linear and circular polarisation being simultaneously measured: $\phi$ can be a rational angle, or it can have a rational cosine. 

We do not have to give up deterministic realism to explain GHZ in this discretised picture of quantum physics. 

\section{Discussion}
\label{discussion}

This work was motivated by a belief that the synthesis of quantum and gravitational physics will require more give from the quantum side and less from the gravitational side. Here we have `second quantised' quantum mechanics in an unconventional way: by discretising complex Hilbert space. The particular discretisation leads to an inherent number-theoretic incommensurateness between complex amplitudes and complex phases. The discretisation allows complex Hilbert states to be given an ensemble interpretation, and the incommensurateness generates all the conventional properties of quantum systems typically associated with the algebra of Hermitian operators in complex Hilbert Space, but without invoking such an algebra. Importantly, with this discretisation, it is no longer necessary to interpret the violation of Bell inequalities as implying a breakdown of causal realism in space-time, making the discretised model more consistent with the causal determinism of general relativity than is quantum mechanics. 

Although the experimental consequences of this discretised theory can be made identical to quantum physics by setting the parameter $N$ to a large enough value, here estimates of $N$ have been given that lead to experimentally testable differences from quantum mechanics: that systems with masses greater the Planck mass will \emph{not} exhibit quantum properties, and that general purpose noise-free quantum computers may in principle be limited to a few hundred qubits. The latter invokes the fact that the discretised theory has a geometric underpinning, describing the fractal invariant set of the universe. 

There is one further implication of this fractal geometry worth commenting on here. Consider a state space trajectory at some low level of fractal iteration and zoom into it with higher and higher magnification. At the next level of fractal iteration the trajectory appears as a helix of trajectories, i.e., as an ensemble. At the next level of iteration the trajectory again returns to appear as a single curve. A trajectory vacillates between a single curve and a helix ensemble as the zoom proceeds. This means that the laws of physics similarly should vacillate between being classical-like and quantum-like. This gives a very different picture to the usual one, where classical theories are assumed to be less fundamental than quantum theories. The author intends to expand on this in more detail elsewhere.    

Although a radical modification of quantum mechanics has been proposed, there has to be some modification to general relativity to make it compatible with this discretised version of quantum physics. Most importantly, the energy-momentum tensor on the right hand side of the field equations should describe not only energy-momentum distributions in our space time $\mathscr M$, but energy-momentum distributions on space-times $\mathscr M'$ in the neighbourhood of our space time, on the $p$-adic geometry in state space, suitably weighted with a suitable Haar measure $\rho_H$, i.e.,
\be
G_{\mu\nu}(\mathcal M)=8\pi\int_{\mathcal M'} d\mathcal  M' \  \rho_H({\mathcal  M, \mathcal M'})\ T_{\mu\nu}(\mathcal M')
\ee
In the classical ($N=1$) limit of GR, $\rho_H({\mathcal M, \mathcal M'})=\delta(\mathcal M - \mathcal M')$. Whether the difference between $\rho_H({\mathcal M, \mathcal M'})$ and $\delta(\mathcal M - \mathcal M')$ can help explain the existence of dark matter (as regular matter on neighbouring space-times on $I_U$, having a gravitational effect in our space-time) remains to be seen. 

\section*{Acknowledgements} My thanks to Michael Hall, Jonte Hance and Sabine Hossenfelder for helpful discussions. 
\bibliography{mybibliography}
\end{document}